\renewcommand{\supp}{\mathrm{supp}\,}
\newcommand{\ind}{\mathrm{Ind}}
\title{Absolutely Continuous Edge Spectrum of Hall Insulators on the Lattice}
\author{Alex Bols and Albert H. Werner \\
\normalsize
QMATH, Department of Mathematical Sciences, University of Copenhagen \\
\normalsize
Universitetsparken 5, 2100 Copenhagen, Denmark\\
\normalsize
e-mail : \texttt{alex-b@math.ku.dk, werner@math.ku.dk}\\
}
\date{}
\begin{document}

\maketitle

\abstract{The presence of chiral modes on the edges of quantum Hall samples is essential to our understanding of the quantum Hall effect. In particular, these edge modes should support ballistic transport and therefore, in a single particle picture, be supported in the absolutely continuous spectrum of the single-particle Hamiltonian. We show in this note that if a free fermion system on the two-dimensional lattice is gapped in the bulk, and has a nonvanishing Hall conductance, then the same system put on a half-space geometry supports edge modes whose spectrum fills the entire bulk gap and is absolutely continuous.}

\section{Introduction}

In a finite Hall insulator, the Hall conductance is quantized to remarkable accuracy in multiples of $e^2 / h$, where $e$ is the charge of the electron, and $h$ is Planck's constant. The Hall current consists of  a bulk and an edge contribution which combine to give the quantized result. The edge current arises because the driving potential results in a skewed occupation of the \emph{edge states} of the sample. These edge states are chiral, \ie they carry a ballistic current predominantly either clockwise or counterclockwise along the edge, so a different occupation of edge modes on opposite sides of the sample leads to a net current perpendicular to the potential drop. The existence of such chiral edge modes is therefore essential to our understanding of the quantum Hall effect \cite{Laughlin1981, Halperin1982}.

The chiral edge modes must allow ballistic motion along the edge and  should therefore correspond to absolutely continuous spectrum (see for example theorem 5.7 in \cite{Teschl2009}). We show in this note that non-trivial Hall insulators modelled by free fermions on the lattice $\Z^2$ inevitably yield edge modes corresponding to absolutely continuous spectrum, filling the entire bulk gap.

The presence of edge states corresponding to (absolutely) continuous spectrum has been shown for the Landau Hamiltonian in the continuum with a weak disorder potential, where the edge is introduced either by a steep edge potential or appropriate half-plane boundary conditions \cite{MacrisMartinPule1999, FrohlichGrafWalcher2000, DeBievrePule2002, HislopSoccorsi2008, BrietHislopRaikovSoccorsi2009}. Except for the first, all these works rely on Mourre estimates to conclude purely absolutely continuous spectrum in the bulk gap. 


In this note we do not use Mourre estimates, but instead appeal to the bulk-edge correspondence for Hall insulators, which has been proven in various guises for free fermions on the lattice \cite{SchulzKellendonkRichter2000, KellendonkRichterSchulzbaldes2002, ElbauGraf2002, Macris2003preprint, ElgartGrafSchenker2005, SchulzBaldesProdan2016, ShapiroEtAl2020}. The topological index characterizing the edge system in this bulk-edge correspondence is a property of a unitary operator which is defined in terms of the edge states. Using a result from \cite{AschBourgetJoye2020} one immediately gets that this unitary has absolutely continuous spectrum if the topological index is nonzero. We then show that the Hamiltonian of the half-plane system inherits this absolutely contiuous spectrum.

A possible concrete advantage of our method is the following. The techniques using Mourre estimates lead to \emph{purely} continuous spectrum in the gap. That method therefore seems to exclude from the start the possibility of dealing with a mobility gap. Our method may be more promising in this regard.


\section{Setup and Result}

We consider free electrons moving on the lattice $\Z^2$ modelled by a local bulk Hamiltonian $H$ on $l^2(\Z^2)$. Associated to this bulk Hamiltonian there is an edge Hamiltonian $\hat H$ acting on $l^2(\Z \times \N)$ which is obtained by restricting the bulk Hamiltonian to the half-space. We spell out detailed assumptions on the bulk and edge Hamiltonians at the end of this section.

We assume that for some open interval $\Delta$
$$ \Delta \cap \sigma(H) = \emptyset. $$
Such an interval $\Delta$ is called a \emph{bulk gap}. We let $P_F = \chi_{\leq \mu}(H)$ for any $\mu \in \Delta$ denote the Fermi projection.

\begin{theorem} \label{thm:main_theorem}
	Let $H$ be a bulk Hamiltonian with bulk gap $\Delta$ and let
	$$ \ind_B(H, \Delta) := -\ind(P_F, U) \in \Z $$
	be the \emph{bulk index}, where $U := \ed^{\iu \arg \vec X}$ with $\vec X = (X_1, X_2)$ the vector of position operators.

	If $\ind_B \neq 0$ then
	$$ \Delta \subset \sigma_{ac}(\hat H). $$
	\ie the edge Hamiltonian $\hat H$ has absolutely continuous spectrum everywhere in the bulk gap $\Delta$.
\end{theorem}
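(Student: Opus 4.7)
The strategy suggested by the introduction is a three-step sandwich: pass from the Hamiltonian $\hat H$ to an auxiliary unitary $V$ on $\ell^2(\Z\times\N)$ via a smooth switch, read off absolute continuity of $V$ from a topological criterion, and then transport it back to $\hat H$.

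Concretely, I would first choose a smooth monotone function $g : \R \to [0,1]$ that is $0$ below $\Delta$, is $1$ above $\Delta$, and restricts to a diffeomorphism $g|_\Delta : \Delta \to (0,1)$. Set $V := \ed^{-2\pi \iu g(\hat H)}$. Because $g'$ is supported in $\Delta$, and $\hat H$ agrees with $H$ away from the edge while $H$ has spectral gap $\Delta$, the operator $V - \id$ is localized near the edge $\Z \times \{0\}$; this is the edge unitary that appears in the discrete bulk--edge correspondences of \cite{SchulzKellendonkRichter2000, KellendonkRichterSchulzbaldes2002, ElbauGraf2002, ElgartGrafSchenker2005, SchulzBaldesProdan2016}. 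Those references supply the identification
\[
\ind_B(H,\Delta) \;=\; \ind_E(V),
\]
where the right-hand side is a Fredholm index of $V$ compressed to a half-line in the $X_1$-direction, equivalently a noncommutative winding number of $V$ along the edge. The hypothesis $\ind_B\neq 0$ therefore forces $\ind_E(V)\neq 0$.

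Next I would invoke the main result of \cite{AschBourgetJoye2020}: a unitary on a half-line system with nonvanishing edge index of this type must have purely absolutely continuous spectrum on $S^1$, with a possible exceptional point at $1$. In our case the point $1\in S^1$ corresponds, under $\ed^{-2\pi\iu(\cdot)}$, to the values of $g$ equal to $0$ or $1$, i.e.\ to spectrum of $\hat H$ outside of $\Delta$. Thus the entire arc $S^1\setminus\{1\}$ lies in $\sigma_{ac}(V)$.

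Finally I would transport absolute continuity back to $\hat H$. Let $\phi : \Delta \to S^1\setminus\{1\}$ be the smooth diffeomorphism $\phi(E) := \ed^{-2\pi\iu g(E)}$. For any vector $\psi$, the spectral measure $\mu^V_\psi$ on $S^1\setminus\{1\}$ is the pushforward of $\mu^{\hat H}_\psi|_\Delta$ by $\phi$, and since $\phi$ is a smooth diffeomorphism this pushforward preserves the Lebesgue decomposition: $\mu^{\hat H}_\psi|_\Delta$ is absolutely continuous with respect to Lebesgue on $\Delta$ iff $\mu^V_\psi|_{S^1\setminus\{1\}}$ is absolutely continuous with respect to Lebesgue on $S^1\setminus\{1\}$. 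The conclusion $S^1\setminus\{1\}\subset\sigma_{ac}(V)$ therefore upgrades to $\Delta \subset \sigma_{ac}(\hat H)$.

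The step I expect to be most delicate is verifying that the unitary $V=\ed^{-2\pi\iu g(\hat H)}$ actually falls within the class analyzed in \cite{AschBourgetJoye2020} and that the index computed there coincides, via bulk--edge correspondence, with $\ind_B(H,\Delta)$. This needs the now-standard locality estimates for smooth functions of gapped or edge-localized Hamiltonians (to make $V-\id$ sufficiently rapidly decaying away from the edge and to control trace-class commutators with the position operators), together with an explicit matching of the two index formulas. The pushforward/diffeomorphism argument in the last step is routine once $g|_\Delta$ is arranged to be smooth with nonvanishing derivative on $\Delta$.
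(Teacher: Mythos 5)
Your proposal is correct and follows essentially the same route as the paper: form the edge unitary $\ed^{2\pi\iu g(\hat H)}$, use the bulk--edge correspondence (the paper cites Theorem 2.11 of \cite{ShapiroEtAl2020}) to get a nonzero edge index, apply Theorem 2.1 of \cite{AschBourgetJoye2020} to conclude the unitary's absolutely continuous spectrum is the whole circle, and pull this back to $\hat H$ by a spectral-mapping/pushforward argument on the gap. The one step you flag as delicate --- that $[\,W_g(\hat H),\hat\Pi_1]$ is trace class so the index-of-projections framework applies --- is exactly what the paper settles by citing Lemmas A.2 and A.3 of \cite{ElbauGraf2002}.
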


The bulk index $\ind(U, P_F)$ is defined as the index of the pair of projections $U P_F U^\dag$ and $P_F$ \cite{AvronSeilerSimon1994}:
\begin{definition} \label{def:index}
	If $U$ is unitary and $P$ is a projection such that the difference
	$$ U^\dag P U - P $$
	is compact, then we define
	$$ \ind(P, U) := \dim \ker (P - U P U^\dag - \I) - \dim \ker ( P - U P U^\dag + \I) \in \Z. $$
\end{definition}
This index is stable under norm-continuous changes of both $U$ and $P$ as long as it remains well defined, see \cite{AvronSeilerSimon1994}.

\begin{remarks*}
	\begin{enumerate}
		\item It is believed that the bulk index equals the Hall conductance of the material as obtained from the Green-Kubo formula, up to a factor $e^2 / h$ with $e$ the charge of the electron, and $h$ Planck's constant. Under very natural homogeneity assumptions on the Hamiltonian (in particular, if the Hamiltonian is translation invariant) this has been proven, see \cite{BellissardEtAl1994, AizenmanGraf1998}.
		\item It is shown in \cite{AizenmanGraf1998} that the difference $P_F - U P_F U^\dag$ with $U$ as in the statement of theorem \ref{thm:main_theorem} is indeed compact, so the bulk index is well defined.
	\end{enumerate}	
\end{remarks*}

We now spell out the assumtions on the Hamiltonians $H$ and $\hat H$.

Lattice points are labelled by $\vec x = (x_1, x_2) \in \Z^2$ and we have corresponding states
$$ | \vec x \rangle : \Z^2 \rightarrow \C : \vec y \mapsto \delta_{\vec x, \vec y}. $$
These states form an orthonormal basis of the Hilbert space.

The bulk Hamiltonian $H$ is a self-adjoint operator on $l^2(\Z^2)$ which is exponentially local:
$$  \abs{\langle \vec x, \, H \,  \vec y \rangle } \leq C \ed^{-\norm{\vec x - \vec y} / \xi} $$
for some $C < \infty$, $\xi > 0$ and all $\vec x, \vec y \in \Z^2$.

The Hilbert space for the edge system is $\hat \caH = l^2(\Z \times \N)$. The natural inclusion of the half-space lattice $\Z \times \N$ in the bulk lattice $\Z \times \Z$ induces an injection
$$ \hat \caH \xrightarrow{\iota} \caH. $$

The half-space Hamiltonian $\hat H$ is a self-adjoint operator on $l^2(\Z \times \N)$ that agrees with the bulk Hamiltonian in the bulk:
$$ \abs{ \langle \vec x, \, (\iota^{\dag} H \iota - \hat H) \, \vec y \rangle } \leq C \ed^{- \norm{\vec x - \vec y}/\xi   - y_2 / \xi' } $$
for some $C < \infty, \xi, \xi' > 0$ and all $\vec x, \vec y \in \Z \times \N$. \ie up to a boundary condition which is exponentially localized near the edge of the system, the edge Hamiltonian equals the restriction of the bulk Hamiltonian to the half-space $\Z \times \N$. In particular, the half-space Hamiltonian is itself exponentially local.

\section{Edge index and bulk-edge correspondence}

Let $g : \R \rightarrow [0, 1]$ be a smooth function interpolating from $1$ to $0$ and such that $g'$ is supported in the bulk gap $\Delta$. We have
$$ P_F = g(H). $$

Consider now the unitary $W_g(\hat H)$ where $W_g$ is the function
\begin{equation} \label{def:W_g}
	W_g : \R \rightarrow \C : x \mapsto \ed^{2 \pi \iu x }.
\end{equation}
This unitary is local and supported near the edge of the half-space. In particular:
\begin{lemma}[\cite{ElbauGraf2002}] \label{lem:W_g_locality}
	Let $\hat \Pi_1$ denote the projection on $\{ \vec x \in \Z \times \N \, | \, x_1 \geq 0 \}$, then the commutator $[W_g(\hat H), \hat \Pi_1 ]$ is trace class.
\end{lemma}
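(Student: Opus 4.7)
The plan is to first show that $F(\hat H) := W_g(\hat H) - I$ is exponentially localized both off the diagonal and away from the edge of the half-space, and then to deduce from this joint localization that the commutator is trace class. Since $g'$ is supported in $\Delta$, the function $F(x) = \ed^{2\pi \iu g(x)} - 1$ is smooth, compactly supported in $\bar \Delta$, with all derivatives vanishing at the endpoints; in particular $F \equiv 0$ on $\sigma(H)$, so $F(H) = 0$ and we may write $F(\hat H) = F(\hat H) - \iota^\dag F(H) \iota$.

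For the key estimate, I would apply the Helffer--Sj\"ostrand representation
$$ F(\hat H) - \iota^\dag F(H) \iota = - \frac{1}{\pi} \int_\C \bar\partial \tilde F(z) \big[ (\hat H - z)^{-1} - \iota^\dag (H - z)^{-1} \iota \big] d^2 z, $$
where $\tilde F$ is an almost-analytic extension of $F$ supported in a small complex neighborhood of $\bar \Delta$. Expanding the bracket via a resolvent identity produces a term linear in the boundary correction $V := \iota^\dag H \iota - \hat H$, whose matrix elements by hypothesis obey $|V_{\vec x, \vec y}| \leq C \ed^{-\norm{\vec x - \vec y}/\xi} \ed^{-y_2 / \xi'}$; this is the source of the edge decay. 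The other factors of $(H - z)^{-1}$ contribute exponential off-diagonal decay via Combes--Thomas, uniformly for $\Re z \in \Delta$ and small $|\Im z|$, while the rapid decay $|\bar\partial \tilde F(z)| = O(|\Im z|^N)$ dominates the trivial bound $\|(\hat H - z)^{-1}\| \leq 1/|\Im z|$. Tracking these estimates through the integral yields
$$ \abs{\langle \vec x, F(\hat H)\, \vec y \rangle} \leq C_1 \ed^{-\norm{\vec x - \vec y}/\xi_1} \ed^{-(x_2 + y_2)/\xi_2} \qquad \text{for some } \xi_1, \xi_2 > 0. $$

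Given this bound, the trace-class conclusion is quick. Since $I$ commutes with $\hat \Pi_1$, we have $[W_g(\hat H), \hat \Pi_1] = [F(\hat H), \hat \Pi_1]$ with entries
$$ \langle \vec x, [F(\hat H), \hat \Pi_1]\, \vec y \rangle = F(\hat H)_{\vec x, \vec y} \big(\mathbf{1}_{y_1 \geq 0} - \mathbf{1}_{x_1 \geq 0}\big), $$
vanishing unless $x_1$ and $y_1$ lie on opposite sides of $0$, which forces $|x_1 - y_1| \geq |x_1|$. The elementary inequality $\|A\|_1 \leq \sum_{\vec x} \|A^\dag |\vec x\rangle\|_{\ell^2}$ combined with the matrix element bound gives $\sum_{\vec y} |[F(\hat H), \hat \Pi_1]_{\vec y, \vec x}|^2 \leq C' \ed^{-2|x_1|/\xi_1 - 2 x_2/\xi_2}$, which is summable over $\vec x \in \Z \times \N$.

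The main obstacle is establishing the edge-localization factor $\ed^{-(x_2+y_2)/\xi_2}$ in the key estimate. The difficulty is that $(\hat H - z)^{-1}$ on its own has no such decay, since $\hat H$ generically has spectrum in $\Delta$ coming from edge modes; the argument must therefore exploit that the bulk and half-space resolvents differ only through a single insertion of the edge-localized $V$, so that the edge decay is injected exactly once into the resolvent sandwich and then spreads only by a Combes--Thomas rate, rather than being lost in the Helffer--Sj\"ostrand integration.
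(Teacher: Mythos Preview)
Your plan is correct and is exactly what the paper invokes by citing Lemmas~A.2 and~A.3 of \cite{ElbauGraf2002}: the Helffer--Sj\"ostrand representation together with the geometric resolvent identity and Combes--Thomas bounds gives the joint off-diagonal/edge decay of $W_g(\hat H)-\I$, and your commutator estimate then yields trace class. One minor overstatement: with a merely smooth $F$, the $|\Im z|$-dependent Combes--Thomas rate for $(\hat H-z)^{-1}$ integrated against $\bar\partial\tilde F = O(|\Im z|^N)$ produces decay in $x_2$ that is polynomial of arbitrary order rather than exponential (this is also what Elbau--Graf state), but that is more than sufficient for your summability step.
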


This lemma follows immediately from lemmas A.2. and A.3. in \cite{ElbauGraf2002}.

It follows that $W_g(\hat H)^\dag \hat \Pi_1 W_g(\hat H) - \hat \Pi_1 = W_g(\hat H)^\dag [\hat \Pi_1, W_g(\hat H)]$ is also trace class and in particular compact so the index
$$ \ind_E(H, \Delta) := \ind(\Pi_1, W_g(\hat H)) \in \Z $$
is well defined. We call this integer the \emph{edge index}.

\begin{remark*}
	The edge index may a priori depend on the boundary conditions defining the edge Hamiltonian $\hat H$. The following theorem implies that this is not the case, justifying our notation $\ind_E(H, \Delta)$.
\end{remark*}

\begin{theorem}[Theorem 2.11 of \cite{ShapiroEtAl2020}] \label{thm:bulk-edge}
	Under the above assumptions on $H$,
	$$ \ind_B(H, \Delta) = \ind_E(H, \Delta) $$
	for any bulk gap $\Delta$.
\end{theorem}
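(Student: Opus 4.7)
My plan is to bridge the two indices via a common trace expression. Both $\ind(P_F, U)$ and $\ind(\hat \Pi_1, W_g(\hat H))$ fit into the Avron-Seiler-Simon framework of an index of a pair (projection, unitary) whose difference $V P V^\dag - P$ is compact, and each such index admits a trace formula once that difference is trace class. The central driving observation is that $W_g(H) = \ed^{2\pi \iu g(H)} = \ed^{2\pi \iu P_F} = \I$, because $g(H)$ is the projection $P_F$. So any nontriviality of $W_g(\hat H)$ comes from spectrum of $\hat H$ inside $\Delta$, that is, from the edge states, and by locality of the functional calculus this nontriviality is localized near the edge line $\{ x_2 = 0 \}$.

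Concretely, I would first extend $\hat H$ to a self-adjoint operator $\tilde H$ on $l^2(\Z^2)$ that agrees with $H$ far above the edge, for instance by a smooth interpolation. Combining the locality assumptions on $H$ and $\hat H$ with a Combes-Thomas-type estimate applied to the switch $g$ (whose derivative lies in the gap), one checks that $g(\tilde H) - P_F$ is trace class and exponentially localized near $\{ x_2 = 0 \}$, so that $W_g(\tilde H) - \I$ is trace class. This allows me to transport $\ind_E$ into a trace formula on the full space $\caH$ involving $\iota \hat \Pi_1 \iota^\dag$ and $W_g(\tilde H)$. Next, using the stability of the index under norm-continuous changes of $(P, V)$ that preserve compactness of $V P V^\dag - P$, I would homotope the quarter-plane projection $\iota \hat \Pi_1 \iota^\dag$ into a projection whose boundary runs along the ray where $\arg \vec X$ has its discontinuity, keeping trace-class conditions intact throughout by interleaving edge-localization from $W_g(\tilde H) - \I$ with quasi-locality of $P_F$. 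Finally, in this end configuration I would identify both indices with the same Chern-cocycle-type trace built from $P_F$ and two position switch functions (one transverse to the edge, one across the angular cut), obtaining $\ind(\hat \Pi_1, W_g(\hat H)) = -\ind(P_F, U)$, with the sign emerging from the orientation of the two-dimensional Chern cocycle.

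The main obstacle will be trace-class bookkeeping throughout these manipulations. Away from the ``corner'' where the physical edge meets the angular ray along which $\arg \vec X$ jumps, the relevant operators are comfortably trace class by exponential locality and gap-induced decay. The delicate region is precisely at and near that corner, and along the intermediate projections in the homotopy: there one must combine the (at best polynomial, via quasi-locality) decay of $P_F$ and its commutators with position operators, the exponential localization of the boundary perturbation in $\hat H$, and the geometry of $U$, so that every intermediate pair $(P_t, V_t)$ stays within the domain of the Avron-Seiler-Simon index and every rearrangement under the trace is justified. This careful verification of trace-class estimates through the corner is the technical heart of the argument.
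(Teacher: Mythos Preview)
The paper does not prove this statement at all: Theorem~\ref{thm:bulk-edge} is quoted verbatim as Theorem~2.11 of \cite{ShapiroEtAl2020} and used as a black box, with only a remark that the Fredholm-index formulation there matches Definition~\ref{def:index} via \cite{AvronSeilerSimon1994}. So there is no in-paper argument to compare against; your sketch is an attempt to reproduce what the cited references do.

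As a sketch of a bulk--edge proof in the Elbau--Graf / Elgart--Graf--Schenker style, your outline points in the right direction, but it contains a genuine error that would derail the argument as written. You claim that $g(\tilde H) - P_F$, and hence $W_g(\tilde H) - \I$, is trace class. This is false: these operators are supported near the edge line $\{x_2 = 0\}$, which is an \emph{infinite} set, so they are not even compact. What is trace class is the commutator $[W_g(\hat H), \hat \Pi_1]$ (Lemma~\ref{lem:W_g_locality}), because it is localized near the \emph{corner} where the physical edge meets the switch in $\hat \Pi_1$. The whole proof has to be organized around such corner-localized trace-class objects, not around $W_g(\tilde H) - \I$ itself; in particular your ``transport of $\ind_E$ to a trace formula on $\caH$'' cannot proceed via the trace-class property you invoke. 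The homotopy-of-projections idea in your second paragraph is plausible in spirit, but its validity again rests entirely on keeping $V P_t V^\dag - P_t$ compact (indeed trace class) throughout, which you will only get from corner-type localization, not from the incorrect global trace-class claim.
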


\begin{remark*}
	In \cite{ShapiroEtAl2020} the bulk and edge indices are given als Fredholm indices. The equivalence to the definition \ref{def:index} is established in theorem 5.2. in \cite{AvronSeilerSimon1994}.
\end{remark*}









\section{Proof}

Theorem \ref{thm:main_theorem} is an almost direct consequence of the following proposition from \cite{AschBourgetJoye2020}.

\begin{proposition}[Theorem 2.1. of \cite{AschBourgetJoye2020}] \label{prop:abs_cont_spectrum}
	Let $U$ be a unitary operator on a Hilbert space and $P$ an orthogonal projection such that $[U, P]$ is trace class, then the index $\ind(U, P)$ is a well-defined finite integer. If $\ind(U, P) \neq 0$, then the absolutely continuous spectrum of $U$ is the entire unit circle.
\end{proposition}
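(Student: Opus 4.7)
My plan is to prove the proposition in three stages: well-definedness of the index, a reduction to a trace formula, and the spectral analysis that links the trace to the absolutely continuous spectrum. The first stage is immediate from the hypothesis: $P - U P U^\dag = -[U, P] U^\dag$ is trace class, hence compact and self-adjoint, so its nonzero spectrum is discrete with finite multiplicities. In particular the $\pm \I$ eigenspaces are finite-dimensional and $\ind(U, P)$ is a finite integer.

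Next I would reduce the index to a trace. Kato's theorem on pairs of projections says that whenever $P, Q$ are projections with $P - Q$ self-adjoint compact, the eigenvalues in $(-1, 0) \cup (0, 1)$ appear in $\pm \lambda$ pairs with equal multiplicities (the involution $\I - 2Q$ intertwines the two eigenspaces). Applied to $Q = U P U^\dag$ with $P - Q$ trace class, these paired contributions cancel, leaving
$$ \ind(U, P) = \mathrm{Tr}(P - U P U^\dag). $$
The proposition then reduces to showing that if $\sigma_{ac}(U) \neq S^1$, the right-hand side vanishes.

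The guiding principle for this last step is that the trace only ``sees'' the absolutely continuous part of $U$. In the pure point case this is transparent: picking an orthonormal eigenbasis $\{\phi_n\}$ of $U$ with $U \phi_n = \ed^{\iu \theta_n} \phi_n$, each diagonal element
$$ \langle \phi_n, (P - U P U^\dag) \phi_n \rangle = \langle \phi_n, P \phi_n \rangle - |\ed^{-\iu \theta_n}|^2 \langle \phi_n, P \phi_n \rangle = 0, $$
and the trace vanishes term-by-term. For general singular spectrum I would use the spectral theorem to realise $U$ as multiplication by $z$ on a direct integral over singular scalar measures, and replicate the cancellation via a Wiener-type ergodic averaging: the time averages of $\langle \psi, U^n \psi \rangle$ vanish for vectors whose spectral measure has no atoms. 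Combining this with the trace class regularity of $P - U P U^\dag$ should force the singular continuous contribution to cancel as well. Finally, to obtain the full statement $\sigma_{ac}(U) = S^1$ (not just nontrivial), one isolates an open arc $I$ disjoint from $\sigma_{ac}(U)$ and applies the above vanishing to the restriction $U \vert_{\ran \chi_I(U)}$, whose spectrum is singular.

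The main obstacle is the rigorous treatment of the singular continuous case: term-by-term cancellation genuinely requires an eigenbasis, which does not exist in that sector. I expect the heart of the argument to be either a direct integral decomposition of $P - U P U^\dag$ combined with a finite-rank approximation compatible with the spectral subspaces of $U$, or an ergodic/Wiener averaging argument that bypasses eigenvectors entirely and works only with spectral measures of individual vectors.
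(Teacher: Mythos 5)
The paper offers no proof of this proposition: it is imported verbatim as Theorem~2.1 of \cite{AschBourgetJoye2020}, so your attempt must be measured against the argument in that reference rather than anything in the text. Your first two stages are sound and standard. Well-definedness follows as you say from $P - UPU^\dag = -[U,P]U^\dag$ being trace class, and the Avron--Seiler--Simon pairing of the eigenvalues $\pm\lambda$, $\lambda \in (0,1)$, of a trace-class difference of projections gives $\ind(P,U) = \mathrm{Tr}(P - UPU^\dag)$ (a small slip: the operator implementing the pairing is $\I - P - Q$, which anticommutes with $P-Q$, not $\I - 2Q$). The term-by-term cancellation in the pure point case is also correct, since the trace of a trace-class operator may be evaluated in any orthonormal basis.

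The genuine gaps are both in your third stage. First, the vanishing of the trace over the singular continuous subspace is not a loose end to be closed by gesturing at Wiener averaging --- it is essentially the entire content of the theorem, and the passage from Ces\`aro decay of $\langle \psi, U^n \psi\rangle$ for continuous spectral measures to the vanishing of $\mathrm{Tr}(T - UTU^\dag)$ requires a real argument (a RAGE-type estimate combined with the identity $\mathrm{Tr}(T - UTU^\dag) = \tfrac{1}{N}\mathrm{Tr}(T - U^N T U^{\dag N})$, or a careful finite-rank approximation), which you do not supply. Second, and more structurally, your localization step cannot deliver $\sigma_{ac}(U) = S^1$ even if the vanishing lemma were proved. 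If $I$ is an arc with $\sigma_{ac}(U) \cap I = \emptyset$ and $\Sigma = \chi_I(U)$, then $\Sigma$ commutes with $U$ but not with $P$, so the compression $\Sigma P \Sigma$ is a positive contraction rather than a projection and the index of a pair of projections is not defined for the ``restriction''; you would need the vanishing statement for general $0 \leq T \leq \I$ with $T - UTU^\dag$ trace class. Worse, granting all of that, the block decomposition of the trace only yields $\ind(P,U) = \mathrm{Tr}\big((\I - \Sigma)(P - UPU^\dag)(\I-\Sigma)\big)$, which is no contradiction: the nonzero index may be carried entirely by the spectral subspace over $I^c$. Completed along your lines, the argument proves only $\sigma_{ac}(U) \neq \emptyset$. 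To reach the full circle one needs an additional mechanism tying the index to \emph{every} arc of the spectrum (this is what the cited proof provides, and it is not obtainable by the block-trace bookkeeping you propose).
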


The unitary $W_g(\hat H)$ and the projecion $\hat \Pi_1$ satisfy the assumptios of this proposition due to lemma \ref{lem:W_g_locality}.

We are now ready to give the proof of the main theorem.

\begin{proofof}[Theorem \ref{thm:main_theorem}]
	Using theorem \ref{thm:bulk-edge}, if $\ind_B(H, \Delta) \neq 0$ also $\ind_E(H, \Delta) = \ind(W_g(\hat H), \hat \Pi_1) \neq 0$. It follows then from lemma \ref{lem:W_g_locality} and proposition \ref{prop:abs_cont_spectrum} that the absolutely continuous spectrum of the edge unitary $W_g(\hat H)$ is the whole unit circle.

	For any $\ep < \abs{\Delta}/2$ let $\Delta_{\ep} = \{ x \, | \, \dist(x, \Delta^c) > \ep \}$ be the gap $\Delta$ shrunk by $\ep$. Similarly, for any $\delta < \pi$ let $\Delta'_{\delta} =  \{ \ed^{\iu \theta} \, | \, \theta \in (\delta, 2\pi -\delta) \}$ be the unit circle with a closed $\delta$-neighbourhood around $1$ excluded. We can choose $g$ in such a way that $ x \mapsto W_g(x) := \ed^{2 \pi \iu g(x)}$ is a smooth function that satisfies $\Delta_{\ep} = g^{-1} \big(  \Delta'_{\delta(\ep)}   \big)$ for all such $\ep$, and such that $W_g$ is invertible on $\Delta_{\ep}$. Then lemma \ref{lem:ac_spectral_mapping_2} applies to give
	$$ W_g \big( \sigma_{ac}(\hat H) \cap \Delta_{\ep} \big) = \sigma_{ac} \big( W_g(\hat H) \big) \cap \Delta'_{\delta} = \Delta'_{\delta} $$.
	where we used that the absolutely continuous spectrum of $W_g(\hat H)$ is the whole unit circle.

	Since $W_g$ is invertible on $\Delta_{\ep} = W_g^{-1}(\Delta'_\delta)$ it follows that $\Delta_{\ep}$ is contained in the absolutely continuous spectrum of $\hat H$ for any $\ep$. Since the absolutely continuous spectrum is a closed set, it follows that the gap $\Delta$ is contained in the absolutely continuous spectrum of $\hat H$, as required.
\end{proofof}

\appendix

\section{Spectral mapping for absolutely continuous spectrum}

Let $A$ be a bounded self-adjoint operator on a separable Hilbert space $\caH$ and Let $g : \R \rightarrow U(1)$ be a smooth function. By the spectral calculus this defines a unitary operator $B = g(A)$. We investigate the relation between the absolutely continuous spectra of $A$ and $B$.

Let $O$ be $A$ or $B$ and let $X$ be $\R$ if $O$ is self-adjoint, and $U(1)$ if $O$ is unitary. Take a continuous function $f \in C(X)$ from $X$ to $\C$ and a state $\psi \in \caH$. Through the continuous functional calculus we define a bounded linear functional on $C(X)$ by
$$ f \mapsto \langle \psi, f(O) \psi \rangle $$
By Riesz-Markov there is associated to this functional a unique measure $\mu^{(O)}_{\psi}$ on $X$ such that $\mu^{(O)}_{\psi}(X \setminus \sigma(O)) = 0$ and
$$ \langle \psi, f(O) \psi \rangle = \int_{X} \dd \mu^{(O)}_{\psi}(\lambda) \, f(\lambda). $$

Let $\caH^{(O)}_{ac} = \{ \psi \, | \, \mu^{(O)}_{\psi} \, \text{is absolutely continuous w.r.t. Lebesgue} \}$ and $P^{(O)}_{ac}$ the orthogonal projection onto this subspace. Then $O_{ac} = P^{(O)}_{ac} O P^{(O)}_{ac}$ is the absolutely continuous part of $O$ and $\sigma_{ac}(O) = \sigma(O_{ac})$.

We have that $x \in \sigma_{ac}(O)$ if and only if $x \in \supp \mu^{(O)}_{\psi}$ for some $\psi \in \caH^{(O)}_{ac}$.

\begin{lemma} \label{lem:ac_spectral_mapping_1}
	If $g : \R \rightarrow U(1)$ is continuous and invertible on $\sigma(A)$, then
	$$ \sigma_{ac}(g(A)) = g \big( \sigma_{ac}(A) \big). $$
\end{lemma}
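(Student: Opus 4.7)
The plan is to reduce the statement to the ordinary spectral mapping theorem for continuous functional calculus, applied to a single bounded self-adjoint operator---the absolutely continuous part---after showing that the a.c.\ subspaces of $A$ and $B=g(A)$ coincide. First I would establish the pushforward relation for spectral measures. For any $f\in C(U(1))$, the functional calculus gives $f(B)=(f\circ g)(A)$, hence
$$ \langle \psi, f(B)\psi\rangle = \int (f\circ g)\,d\mu^{(A)}_\psi = \int f\,d(g_*\mu^{(A)}_\psi), $$
and uniqueness in Riesz--Markov yields $\mu^{(B)}_\psi = g_*\mu^{(A)}_\psi$.

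Next I would argue that $\mu^{(A)}_\psi$ is Lebesgue a.c.\ if and only if $g_*\mu^{(A)}_\psi$ is. Reading the hypothesis together with the smoothness of $g$ declared at the top of the appendix, $g$ restricts to a $C^\infty$ diffeomorphism of the compact set $\sigma(A)$ onto $g(\sigma(A))$; in particular both $g$ and $g^{-1}$ are Lipschitz there and therefore carry Lebesgue null sets to Lebesgue null sets. By the characterization of absolute continuity as ``vanishes on every Lebesgue null set'' this gives the desired equivalence, so $\caH^{(A)}_{ac}=\caH^{(B)}_{ac}$; call the common projection $P_{ac}$. Since $P_{ac}$ commutes with every bounded Borel function of $A$, and in particular with $B$, the restriction to the a.c.\ subspace satisfies $B_{ac}=g(A_{ac})$.

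Finally, $A_{ac}$ is a bounded self-adjoint operator on $\caH^{(A)}_{ac}$ and $g$ is continuous on $\sigma(A_{ac})\subset\sigma(A)$, so the standard continuous spectral mapping theorem gives $\sigma(g(A_{ac}))=g(\sigma(A_{ac}))$, which is exactly the claim $\sigma_{ac}(B)=g(\sigma_{ac}(A))$. The hard part is the absolute-continuity equivalence in the middle step: without a nonvanishing derivative, a smooth strictly monotone $g$ can collapse a positive-measure set to a null set---e.g.\ when $g'$ vanishes on a fat Cantor set sitting inside $\sigma(A)$---and then the pushforward of an a.c.\ spectral measure becomes singular, breaking the equality $\caH^{(A)}_{ac}=\caH^{(B)}_{ac}$. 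So the proof really hinges on interpreting ``invertible on $\sigma(A)$'' as saying that $g$ is a local, hence (by injectivity) global, diffeomorphism there; once that is granted the remainder is a routine assembly of the pushforward identity with the classical spectral mapping theorem.
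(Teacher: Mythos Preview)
Your proof is correct and follows essentially the same route as the paper's: both establish the pushforward identity $\mu^{(B)}_\psi = g_*\mu^{(A)}_\psi$ and then argue that absolute continuity is preserved in both directions, the paper finishing via the support characterization of $\sigma_{ac}$ while you instead invoke the ordinary spectral mapping theorem for $A_{ac}$. Your discussion of the fat-Cantor obstruction is in fact sharper than the paper, which writes only ``since $g$ is continuous and invertible'' without isolating the bi-Lipschitz/diffeomorphism condition actually needed to guarantee that $g$ and $g^{-1}$ send Lebesgue null sets to null sets.
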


\begin{proof}
	We have $x \in \sigma_{ac}(A)$ if and only if there is a $\psi \in \caH^{(A)}_{ac}$ such that $x \in \supp \mu^{(A)}_{\psi}$. To the same state $\psi$ is associated the linear functional
	$$ f \mapsto \langle \psi, f(g(A)) \psi \rangle $$
	and the unique measure $\mu^{(g(A))}_{\psi}$ on $U(1)$ such that
	$$ \langle \psi, f(g(A)) \psi \rangle = \int_{U(1)} \dd \mu^{(g(A))}_{\psi}(\lambda) f(\lambda)\;. $$
At the same time, we have
	$$  \langle \psi, f(g(A)) \psi \rangle = \int_{\R} \dd \mu^{(A)}_{\psi}(\lambda) f(g(\lambda)). $$
	The spectral measures $\mu^{(A)}_{\psi}$ and $\mu^{(g(A))}_{\psi}$ are therefore related by a change of variable:
	$$ \mu^{(g(A))}_{\psi}(\Delta) = \mu^{(A)}_{\psi}(g^{-1}(\Delta)) $$
	for any Borel set $\Delta$. Since $g$ is continuous and invertible, we have then that the measure $\mu^{(A)}_{\psi}$ is absolutely continuous w.r.t. Lebesgue on $\R$ if and only if $\mu^{(g(A))}$ is absolutely continuous w.r.t. Lebesgue on $U(1)$, and their supports are related by
	$$ \supp \mu^{(g(A))}_{\psi} = g \left(  \supp \mu^{(A)}_{\psi}  \right) $$
	\ie $x \in \supp \mu^{(A)}_{\psi}$ if and only if $g(x) \in \supp \mu^{(g(A))}_{\psi}$. From this we conclude the proof of the lemma.
\end{proof}

We now extend this result a bit as follows,

\begin{lemma} \label{lem:ac_spectral_mapping_2}
	If $g : \R \rightarrow U(1)$ is differentiable on $\sigma(A)$ and invertible on $\Delta_A = g^{-1}(\Delta_B)$ for some open subset $\Delta_B \subset U(1)$, then if $B = g(A)$ we have
	$$ g \big( \sigma_{ac}(A) \cap \Delta_A \big) = \sigma_{ac}(B) \cap \Delta_B $$
\end{lemma}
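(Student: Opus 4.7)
The plan is to reduce Lemma~\ref{lem:ac_spectral_mapping_2} to Lemma~\ref{lem:ac_spectral_mapping_1} by restricting both $A$ and $B$ to the common spectral subspace associated with $\Delta_A$, equivalently $\Delta_B$. On this subspace $g$ becomes a genuine bijection and the change-of-variables argument in the proof of Lemma~\ref{lem:ac_spectral_mapping_1} transfers absolute continuity between the spectral measures of the restricted operators.

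First I would use the Borel functional calculus for $B = g(A)$ to produce a common spectral projection
$$ P := \chi_{\Delta_A}(A) = \chi_{g^{-1}(\Delta_B)}(A) = \chi_{\Delta_B}(g(A)) = \chi_{\Delta_B}(B), $$
which commutes with both $A$ and $B$, so that the restrictions $\tilde A := A|_{\ran P}$ and $\tilde B := B|_{\ran P}$ satisfy $\tilde B = g(\tilde A)$ with $\sigma(\tilde A) \subset \overline{\Delta_A}$ and $\sigma(\tilde B) \subset \overline{\Delta_B}$. A short computation with spectral measures, using that for every $\psi$ one has $\mu_{P\psi}^{(A)} = \mu_\psi^{(A)}|_{\Delta_A}$, shows that
$$ \sigma_{ac}(A) \cap \Delta_A = \sigma_{ac}(\tilde A) \cap \Delta_A, $$
and the analogous identity for $B$. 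It therefore suffices to prove $g\bigl( \sigma_{ac}(\tilde A) \cap \Delta_A \bigr) = \sigma_{ac}(\tilde B) \cap \Delta_B$.

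For this I would rerun the argument in the proof of Lemma~\ref{lem:ac_spectral_mapping_1}, restricting all spectral measures throughout to the open sets $\Delta_A$ and $\Delta_B$: the identity $\Delta_A = g^{-1}(\Delta_B)$ keeps the pushforward relation $\mu_\psi^{(\tilde B)}(S) = \mu_\psi^{(\tilde A)}(g^{-1}(S))$ well behaved for $S \subset \Delta_B$, and $g|_{\Delta_A} : \Delta_A \to \Delta_B$ is now a continuous bijection, automatically a homeomorphism of open $1$-manifolds by invariance of domain. Differentiability of $g$ is used here to ensure that both $g|_{\Delta_A}$ and its inverse have Luzin's property~N, so that absolute continuity of $\mu_\psi^{(\tilde A)}|_{\Delta_A}$ is equivalent to absolute continuity of $\mu_\psi^{(\tilde B)}|_{\Delta_B}$, and the supports inside the respective open sets are related by $g$. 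From this and the characterization of $\sigma_{ac}$ via supports of absolutely continuous spectral measures, the claim follows.

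The main obstacle is precisely that $g$ is only assumed invertible on the open set $\Delta_A$ and not on its closure, so Lemma~\ref{lem:ac_spectral_mapping_1} cannot simply be invoked for $\tilde A$, whose spectrum may contain boundary points of $\Delta_A$ that $g$ folds together. That is why it is essential to cut spectral measures down to the open sets $\Delta_A$ and $\Delta_B$ throughout, exploiting the relation $\Delta_A = g^{-1}(\Delta_B)$ to keep the change-of-variables compatible with the restriction.
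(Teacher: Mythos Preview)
Your approach is essentially the paper's: both restrict to the spectral subspace over $\Delta_A$ (you additionally make explicit that $\chi_{\Delta_A}(A)=\chi_{\Delta_B}(B)$), observe that $g(\tilde A)=\tilde B$, and then run the change-of-variables argument of Lemma~\ref{lem:ac_spectral_mapping_1} on the restricted operators. The paper simply invokes Lemma~\ref{lem:ac_spectral_mapping_1} for $A'$, tacitly treating $\sigma(A')$ as $\Delta_A$; your version is more careful in that you flag the boundary issue $\sigma(\tilde A)\subset\overline{\Delta_A}$ and handle it by cutting all spectral measures to the open sets before pushing forward, which is the right way to make the step airtight.
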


\begin{proof}
	Let $P^{(A)}_{\Delta_A}$ be the spectral projection for $A$ on $\Delta_A$ and $P^{(B)}_{\Delta_B}$ the spectral projection of $B$ on $\Delta_B$. Then letting $A' = P^{(A)}_{\Delta_A} A P^{(A)}_{\Delta_A}$ and $B' = P^{(B)}_{\Delta_B} B P^{(B)}_{\Delta_B}$ we have $g(A') = B'$ and $g : \Delta_A = \sigma(A') \rightarrow \Delta_B$ is differentiable and invertible. Therefore, lemma \ref{lem:ac_spectral_mapping_1} applies and gives
	$$ \sigma_{ac}(B') = g(\sigma_{ac}(A')). $$
	But $\sigma_{ac}(A') = \sigma_{ac}(A) \cap \Delta_A$ and $\sigma_{ac}(B') = \sigma_{ac}(B) \cap \Delta_B$ so
	$$ \sigma_{ac}(B) \cap \Delta_B = g \big( \sigma_{ac}(A) \cap \Delta_A \big) $$
	as required.

\end{proof}

\textbf{Acknowledgements}

A.B. and A.H.W. were supported by VILLIUM FONDEN through the QMATH Centre of Excellence (grant no. 10059). A.H.W. thanks the VILLIUM FONDEN for its support with a Villium Young Investigator Grant (grant no. 25452).

\bibliographystyle{plain}
\bibliography{QuantWalk}

\end{document}